\newtheorem{theorem}{Theorem}
\newtheorem{lemma}[theorem]{Lemma}
\newtheorem{remark}[theorem]{Remark}
\begin{document}
\title{ SOME LANGUAGES RECOGNIZED BY TWO-WAY FINITE AUTOMATA WITH QUANTUM AND CLASSICAL STATES \footnote{This work is supported in
part by the National Natural Science Foundation (Nos.
60873055, 61073054, 61100001), the
Natural Science Foundation of Guangdong Province of China (No. 10251027501000004), the Fundamental Research Funds for the Central Universities (Nos. 10lgzd12,11lgpy36), the Research Foundation for the Doctoral Program of Higher School of Ministry
of Education (Nos. 20100171110042, 20100171120051) of China, the China Postdoctoral Science Foundation project (Nos. 20090460808, 201003375),
and the project of  SQIG at IT, funded by FCT and EU FEDER projects
projects QSec PTDC/EIA/67661/2006, AMDSC UTAustin/MAT/0057/2008,  NoE
Euro-NF, and IT Project QuantTel.} }

\author{Shenggen Zheng$^{1,}$\thanks{{\it  E-mail
address:} zhengshenggen@gmail.com},\hskip
2mm Daowen Qiu$^{1,2,3,}$\thanks{Corresponding author. {\it E-mail address:}
issqdw@mail.sysu.edu.cn (D. Qiu)},
\hskip 2mm Lvzhou Li$^{1,}$
\thanks{{\it  E-mail address:} lilvzhou@gmail.com}
 \\
\small{{\it $^{1}$Department of
Computer Science, Sun Yat-sen University, Guangzhou 510006,
  China }}\\
\small {{\it $^{2}$ SQIG--Instituto de Telecomunica\c{c}\~{o}es, Departamento de Matem\'{a}tica,}}\\
\small {{\it  Instituto Superior T\'{e}cnico, TULisbon, Av. Rovisco Pais
1049-001, Lisbon, Portugal}}\\
\small{{\it $^{3}$ The State Key Laboratory of Computer Science, Institute of Software,}}\\
\small{{ \it Chinese  Academy of Sciences, Beijing 100080, China}}
}

\date{ }
\maketitle \vskip 2mm \noindent {\bf Abstract}
\par
{\it Two-way finite automata with quantum and classical states} (2QCFA) were introduced by Ambainis and Watrous, and it was shown that 2QCFA have superiority over {\it two-way probabilistic finite automata} (2PFA) for recognizing some non-regular languages  such as the language $L_{eq}=\{a^{n}b^{n}\mid n\in \mathbf{N}\}$ and the palindrome language $L_{pal}=\{\omega\in \{a,b\}^*\mid\omega=\omega^R\}$, where $x^R$ is $x$ in the reverse order. It is interesting to find more languages like these that witness the superiority of 2QCFA over 2PFA. In this paper, we consider the language  $L_{m}=\{xcy\mid  \Sigma=\{a, b, c\}, x,y\in\{a,b\}^{*},c\in\Sigma, |x|=|y|\}$ that is similar to the middle language $L_{middle}=\{xay\mid x,y\in\Sigma^{*},a\in\Sigma, |x|=|y|\}$. We prove that
the language  $L_{m}$ can be recognized by 2QCFA with one-sided error in polynomial expected time. Also,  we show that $L_{m}$ can be recognized by 2PFA with bounded error, but only  in exponential expected time. Thus $L_{m}$ is another witness of the fact that 2QCFA are more powerful than their classical counterparts.

\par
\vskip 2mm {\sl Keywords:}   Computing models; Probabilistic finite automata; Quantum
finite automata.
\vskip 2mm

\section{Introduction}
 Interest in quantum computation has steadily increased since Shor's quantum algorithm for factoring integers in polynomial time \cite{Shor} and Grover's algorithm of searching in database of size $n$ with only $O(\sqrt{n})$ accesses \cite{Grover}. Clarifying the power of some fundamental models of quantum computation has attracted wide attentions in the academic community \cite{Gru99,Nie}. As we know, algorithms based on {\it quantum Turing machines} are complicated to implement using today's experiment technology. Therefore, it is natural to consider much more restricted quantum computing models.

As one of the simplest computing models, {\it
deterministic finite automata} (DFA) and {\it probabilistic finite automata (PFA)} have been deeply studied \cite{Hop,AP}.   Correspondingly, it may be interesting to consider restricted {\it
quantum Turing machines}, such as {\it quantum finite automata}
(QFA). QFA can be thought of as a theoretical model of quantum
computers in which the memory is finite and described by a finite-dimensional state space \cite{Amb06}, as {\it finite automata} (FA) are a natural model for classical computing with finite memory \cite{Hop}. QFA were first introduced independently by Kondacs and
Watrous \cite{Kon97}, as well as Moore and Crutchfield
 \cite{Moore}. As a quantum variant of FA, QFA have attracted wide
attentions in the academic community
 \cite{Amb06,LiQ2,LiQ3,Qiu4,AY,AY3}. There are many kinds of QFA having been proposed and studied (e.g., see \cite{LiQ4}). The study of QFA is mainly divided into two kinds:
one is {\it one-way quantum finite automata} (1QFA) whose tape heads move one square
right at each evolution, and the other {\it two-way quantum finite automata} (2QFA), in which
the tape heads are allowed to move towards right or left, or to be stationary.

Furthermore, according
to the measurement times in a computation, 1QFA have two fashions: measure-once 1QFA
(MO-1QFA) proposed by Moore and Crutchfield \cite{Moore} and measure-many 1QFA (MM-1QFA)
studied first by Kondacs and Watrous \cite{Kon97}.  MO-1QFA are strictly less powerful than their classical counterparts DFA \cite{Kon97,Moore}, since they  recognize only a proper subset of {\it regular languages} (RL).  Though MM-1QFA are more powerful than MO-1QFA \cite{Amb-F}, they still recognize  with bounded error  only a  proper subset of RL \cite{Bro}.

2QFA, however, are more powerful than their
classical counterparts.  2QFA can not only recognize all regular languages, but also recognize the non-regular language
$L_{eq}=\{a^{n}b^{n}\mid n\in \mathbf{N}\}$ with bounded error in linear time. Note that {\it two-way deterministic finite automata} (2DFA) recognize  the same family of languages  as DFA \cite{Hop}, and a 2PFA requires
exponential expected time to recognize  $L_{eq}$ \cite{RF,AA}. Therefore, 2QFA are more powerful than their
classical counterparts. However, 2QFA have  a disadvantage in
the sense that we need at least $\mathbf{O}(\log n)$ qubits to
store the positions of the tape head, which is relative to the
length of the input.

In order to conquer the above disadvantage, Ambainis and Watrous \cite{AJ} proposed a different two-way quantum
computing model $\raisebox{0.5mm}{---}$
{\it two-way finite automata with quantum and classical states} (2QCFA)  in 2002. As an intermediate
model between 1QFA and 2QFA, 2QCFA are still more powerful than
their classical counterparts.  A 2QCFA is essentially a
classical 2DFA augmented with a quantum component of constant
size, where the dimension of the associated Hilbert space does not
depend on the length of the input. In spite of the
existing restriction, 2QCFA are more powerful than 2PFA.  Indeed, 2QCFA can
recognize all regular languages with certainty, and particularly, Ambainis and Watrous  \cite{AJ} proved
that this model can also recognize $L_{eq}$ with one-sided error in polynomial expected time and can recognize  palindromes $L_{pal}=\{\omega\in\{a,b\}^{*}\mid\omega=\omega^{R}\}$ with one-sided error in exponential expected time. Note that   no 2PFA can recognize $L_{pal}$ with bounded error.

Several open problems were proposed by  Ambainis and Watrous \cite{AJ}, including the problem whether $L_{middle}=\{xay\mid x,y\in\Sigma^{*},a\in\Sigma, |x|=|y|\}$ can be recognized by 2QCFA or not. In this paper, we does not aim to answer the above question, but we consider a  similar language  $L_{m}=\{xcy\mid  \Sigma=\{a, b, c\}, x,y\in\{a,b\}^{*},c\in\Sigma, |x|=|y| \}$. We prove that  $L_{m}$ can be recognized by 2QCFA with one-sided error in polynomial expected time. Meanwhile we show that $L_{m}$ can also be recognized by 2PFA with bounded error, but in exponential expected time.  Thus $L_{m}$ is another witness of the fact that 2QCFA are more powerful than their classical counterparts.

The remainder of this paper is organized as follows. Some computing models and related definitions are introduced in Section 2. In section 3 we describe a 2QCFA for recognizing $L_m$ with one-sided error in polynomial expected time. In section 4 we show $L_m$ can be recognized by 2PFA with bounded error in exponential expected time. Finally, some concluding remarks are made in Section 5.

\section{Definitions}
We recall the definitions of 2PFA and 2QCFA in this section.

\subsection{Definition of two-way probabilistic finite automata}
The notation of 2PFA was  introduced by Kuklin \cite{YI}, and then studied by  Freivalds \cite{RF} and Dwork etc \cite{CL,CL2,AA}.

A 2PFA $\mathcal{M}$ is defined by a 6-tuple
\begin{equation}
\mathcal{M}=(S,\Sigma,\delta,s_{0},S_{acc},S_{rej})
\end{equation}
where,
\begin{itemize}

\item $S$ is a finite set of classical states;

\item $\Sigma$ is a finite set of input symbols;  the tape symbol
set $\Gamma=\Sigma\cup\{\ |\hspace{-1.5mm}c,\$\}$, where $\ |\hspace{-1.5mm}c\notin \Sigma $ is called the left end-marker and $\$\notin \Sigma$ is called the right end-marker;

\item $s_{0}\in S$ is the initial state of the machine;

\item $S_{acc}\subset S$ and $S_{rej}\subset S$ are the sets of
accepting states and rejecting states, respectively.

\item $\delta$ is the transition function:

\begin{equation}
(S\setminus(S_{acc} \cup S_{rej}))\times \Gamma \times S \times \{-1, 0, 1\} \rightarrow \{0, 1/2, 1\}
\end{equation}
 Essentially, for each state $s\in S$ and each $\sigma\in\Sigma\cup \{\ |\hspace{-1.5mm}c,\$\}$, $\delta(s,\sigma)$ is a coin-tossing distribution\footnote{A coin-tossing distribution on finite set $Q$ is a mapping $\phi$ from $Q$ to \{0, 1/2, 1\} such that $\sum_{q\in Q}\phi(q)=1$, which means choosing $q$ with probability $\phi(q)$.} on $S \times \{-1, 0, 1\}$, where $d=-1$ means that the tape head moves one square left, $d=0$ means that the tape head keeps stationary, and $d=1$ means that the tape head moves one square right. We assume that $\delta$ is well defined so that when the tape head is positioned on the left end-marker $\ |\hspace{-1.5mm}c$ (right end-marker $\$$), the tape head will not move left (right) in next step.

\end{itemize}

The computation of a 2PFA $\mathcal{M}$ on input $\omega\in\Sigma^*$ begins with the initial state $s_0$ and with the word $\ |\hspace{-1.5mm}c\omega\$$ written on the tape where the tape head is positioned on the left end-marker $\ |\hspace{-1.5mm}c$. The computation is then governed (probabilistically) by the transition functions $\delta$ until $\mathcal{M}$ either accepts $\omega$ by entering an accepting state $s_{a}\in S_{acc}$ or rejects $\omega$ by entering a rejecting state $s_{r}\in S_{rej}$. $\mathcal{M}$ halts when it enters an accepting state or a rejecting state. It should be pointed out that the computation could be infinite if neither an accepting state nor a rejecting state is entered. Let $L\subset \Sigma^*$ and $0\leq\epsilon<1/2$. Then a 2PFA $\mathcal{M}$ recognizes $L$ with bounded error if
\begin{itemize}
\item[1.] $\forall\omega\in L$, $Pr[\mathcal{M}\  accepts\  \omega]\geq 1-\epsilon$, and
\item[2.] $\forall\omega\notin L$, $Pr[\mathcal{M}\  rejects\  \omega]\geq 1-\epsilon$.
\end{itemize}

\subsection{Definition of two-way finite automata with quantum and classical states}
2QCFA were introduced  by Ambainis and Watrous \cite{AJ} in 2002, and then studied by in \cite{Qiu2,AY2,Zheng}.

Informally, we describe a 2QCFA as a 2DFA which has access
to a constant size of quantum register, upon which it performs
quantum transformations and measurements. We would refer the readers to
\cite{Gru99,Nie} for a detailed overview of quantum
computing.

A 2QCFA is specified by a 9-tuple
\begin{equation}
\mathcal{M}=(Q,S,\Sigma,\Theta,\delta,q_{0},s_{0},S_{acc},S_{rej})
\end{equation}
where,
\begin{itemize}
\item $Q$ is a finite set of quantum states;

\item $S$ is a finite set of classical states;

\item $\Sigma$ is a finite set of input symbols;  the tape symbol
set $\Gamma=\Sigma\cup\{\ |\hspace{-1.5mm}c,\$\}$, where $\ |\hspace{-1.5mm}c\notin \Sigma $ is called the left end-marker and $\$\notin \Sigma$ is called the right end-marker;

\item $q_{0}\in Q$ is the initial quantum state;

\item $s_{0}\in S$ is the initial classical state;

\item $S_{acc}\subset S$ and $S_{rej}\subset S$ are the sets of
classical accepting states and rejecting states, respectively.

\item $\Theta$ is the transition function of quantum states:
\begin{equation}
S\setminus(S_{acc}\cup S_{rej})\times \Gamma \rightarrow
\mathcal{U}(\mathcal{H}(Q))\cup \mathcal{M}(\mathcal{H}(Q)),
\end{equation}
where
$\mathcal{U}(\mathcal{H}(Q))$ and $\mathcal{M}(\mathcal{H}(Q))$ respectively
denote the sets of
 unitary operators and projective measurements over $\mathcal{H}(Q)$,  and $\mathcal{H}(Q)$ represents the Hilbert space with
the corresponding base identified with set $Q$. Thus,
 $\Theta(s,\gamma)$ corresponds to either a
unitary transformation or a projective measurement.

\item $\delta$ is the transition function of classical states. If
$\Theta(s, \gamma)\in \mathcal{U}(\mathcal{H}(Q))$, then $\delta$
is
\begin{equation}
S\setminus(S_{acc}\cup S_{rej})\times \Gamma \rightarrow S\times
\{-1, 0, 1\},
\end{equation}

which is similar to the transition function defined for 2PFA except that the transition function here is deterministic.
If $\Theta(s, \gamma)\in \mathcal{M}(\mathcal{H}(Q))$ which is a projective measurement,  assume that the projective measurement with the set of possible eigenvalues $R=\{r_1, \cdots, r_n\}$ and
the projector set $\{P(r_i):i=1, \cdots, n\}$ where $P(r_i)$ denotes the projector onto the eigenspace corresponding to $r_i$,
 the measurement result set will be  $R=\{r_{1},r_{2},\dots$,
$r_{n}\}$. Then
 $\delta$ is
\begin{equation}
S\setminus(S_{acc}\cup S_{rej})\times \Gamma \times R\rightarrow
S\times \{-1, 0, 1\},
\end{equation}
 where $\delta(s,\gamma)(r_{i})=(s',d)$ means
that when the projective measurement result is $r_{i}$, the classical state  $s\in S$ scanning $\gamma\in \Gamma$
is changed to state $s'$, and the movement of the tape head is
decided by $d$.
\end{itemize}

Given an input $\omega$, a 2QCFA
$\mathcal{M}=(Q,S,\Sigma,\Theta,\delta,q_{0},s_{0},S_{acc},S_{rej})$
proceeds as follows:

At the beginning, the tape head is positioned on\ $|\hspace{-1.5mm}c$,
the quantum initial state is $|q_{0}\rangle$, the classical
initial state is $s_{0}$, and $|q_{0}\rangle$ will be changed
according to $\Theta(s_{0},\ |\hspace{-1.5mm}c)$.
\begin{itemize}

\item [a.] If $\Theta(s_{0},\ |\hspace{-1.5mm}c)=U\in
\mathcal{U}(\mathcal{H}(Q))$, then the quantum state evolves as
$|q_{0}\rangle\rightarrow U|q_{0}\rangle$, and meanwhile, the
classical state $s_{0}$ is changed to $s'$ according to
$\delta(s_{0},\ |\hspace{-1.5mm}c_{1})=(s',d)$. The movement of the tape head is
decided by $d$.

\item [b.] If $\Theta(s_{0},\ |\hspace{-2.0mm}C_{1})=M\in
\mathcal{M}(\mathcal{H}(Q))$, then the measurement $M$ is performed on $|q_{0}\rangle$.
Let $M=\{P_1,\cdots, P_m\}$ with result set $R=\{r_i\}_{i=1}^m$.  After the measurement $M$ has been performed, we get a result $r_i\in R$ with probability $p_i=\langle q_0|P_i|q_0\rangle$, and the quantum state $|q_0\rangle$ changes to ${P_i|q_0\rangle}/{\sqrt{p_i}}$.
Meanwhile, the classical state changes according to
$\delta(s_{0},\
|\hspace{-1.5mm}c)(r_{i})=(s_{i},d)$. If $s_{i}\in
S_{acc}$ $(S_{rej})$, $\mathcal{M}$ accepts (rejects) $\omega$ and halts; otherwise, the tape head of $\mathcal{M}$ moves according to the direction $d$, and continues to read the next symbol.
\end{itemize}
A computation is assumed to halt if and only if an accepting state
or a rejecting classical state is entered.

Let $L\subset \Sigma^*$ and $0\leq\epsilon<1/2$. A 2QCFA $\mathcal{M}$ recognizes $L$ with one-sided error if

\begin{itemize}
\item[1.] $\forall\omega\in L$, $Pr[\mathcal{M}\  accepts\  \omega]=1$, and
\item[2.] $\forall\omega\notin L$, $Pr[\mathcal{M}\  rejects\  \omega]\geq 1-\epsilon$.
\end{itemize}

\section {A 2QCFA recognizing the language   $L_{m}$ }
We prove that $L_{m}=\{xcy\mid  \Sigma=\{a, b, c\}, x,y\in\{a,b\}^{*},c\in\Sigma, |x|=|y|\}$ can be recognized by 2QCFA with one-sided error in polynomial expected time in this section.

\begin{theorem}\label{th1}
For any $\epsilon>0$, there is a 2QCFA $\mathcal{M}$ that accepts
any $\omega \in L_{m}=\{xcy\mid  \Sigma=\{a, b, c\}, x,y\in\{a,b\}^{*},c\in\Sigma, |x|=|y| \}$
with certainty, rejects any $\omega \notin
L_{m}$ with
probability at least $1-\epsilon$ and halts in polynomial expected time.
\end{theorem}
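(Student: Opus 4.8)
The plan is to reduce membership in $L_{m}$ to a handful of regular conditions together with a single ``balance'' test of the Ambainis--Watrous $L_{eq}$ type, and to discharge the latter with a one-qubit rotation by an irrational multiple of $\pi$. First I would record the following characterization: since $x,y\in\{a,b\}^{*}$ contain no occurrence of the letter $c$ while the central symbol is allowed to range over all of $\Sigma$, a word $\omega$ lies in $L_{m}$ exactly when either (i) the letter $c$ does not occur in $\omega$ and $|\omega|$ is odd, or (ii) the letter $c$ occurs exactly once and sits at the exact center of $\omega$, i.e.\ the number of symbols to its left equals the number to its right. The predicates ``$c$ occurs at least twice'', ``$c$ does not occur'', and ``$|\omega|$ is odd'' are all regular, so the classical 2DFA part of the machine disposes of them with certainty: it rejects outright upon seeing a second $c$, and when it sees no $c$ at all it accepts or rejects purely by the parity of $|\omega|$. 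The only genuinely non-regular obligation remaining is case (ii): when there is precisely one $c$, decide whether the count $\ell$ of symbols preceding it equals the count $r$ of symbols following it.

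For that balance test I would use exactly the device Ambainis and Watrous use for $L_{eq}$. Equip the machine with a single qubit on $\{|q_0\rangle,|q_1\rangle\}$, fix an angle $\theta$ with $\theta/\pi$ irrational (for instance $\cos\theta=3/5$), and let $R_\theta$ be the corresponding plane rotation. On each sweep the classical head reads left to right, maintaining a flag that records whether it has passed the unique $c$, and applies $R_\theta$ to the qubit for every symbol before the $c$ and $R_{-\theta}$ for every symbol after it; starting from $|q_0\rangle$, the net effect of a full sweep is rotation by $(\ell-r)\theta$. The crucial point is an exact dichotomy: if $\ell=r$ the net rotation is the identity and a measurement in $\{|q_0\rangle,|q_1\rangle\}$ returns $|q_0\rangle$ with certainty, whereas if $\ell\neq r$ then, because $\theta/\pi$ is irrational, $(\ell-r)\theta$ is not a multiple of $\pi$ and the measurement yields $|q_1\rangle$ with probability $\sin^{2}((\ell-r)\theta)>0$. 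Rejecting the instant $|q_1\rangle$ is observed therefore never rejects a genuine member, which is what will supply the perfect-completeness half of the one-sided-error guarantee in the theorem.

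To assemble a full recognizer I would wrap the sweep in an outer loop: reset the qubit to $|q_0\rangle$, run one sweep, measure, reject on $|q_1\rangle$, and otherwise flip a fixed-bias classical coin to choose between accepting and performing another sweep. A member returns $|q_0\rangle$ on every sweep, so it is never led to reject and is accepted with probability $1$ through the coin, as required; a non-member exposes its defect with probability $\sin^{2}((\ell-r)\theta)$ on each independent sweep, and a geometric-series estimate against a sufficiently small coin bias forces the probability of an erroneous acceptance below $\epsilon$, giving $\epsilon$-soundness. The hard part will be the running-time claim rather than correctness. A single non-member sweep may betray the inequality only with very small probability precisely when $|\ell-r|\,\theta$ falls close to a multiple of $\pi$, so the expected number of sweeps need not be obviously polynomial; bounding it requires controlling the worst-case per-sweep rejection probability in terms of the input length using the arithmetic of the chosen angle $\theta$, and then balancing that quantity against the coin bias so that expected time stays polynomial while perfect completeness and $\epsilon$-soundness are simultaneously preserved. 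This quantitative estimate is the main obstacle, and it is exactly the Ambainis--Watrous analysis for $L_{eq}$ transported to the balance test around the single $c$.
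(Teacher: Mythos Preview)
Your reduction to a single balance test is the right idea, and your one-qubit sweep (rotate by $+\theta$ before the $c$, by $-\theta$ after, using a classical flag) is a perfectly good alternative to the paper's four-state construction with $U_a$ and the subspace-swapping $U_c$. But there are two problems, one minor and one substantive.

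The minor one is your characterization of $L_m$. In the paper the middle symbol is the \emph{literal} letter $c$; the clause ``$c\in\Sigma$'' is redundant, just as ``$a\in\Sigma$'' is redundant in the definition of $L_{middle}$ (if $a$ there were a variable, $L_{middle}$ would collapse to the regular language of odd-length strings, contradicting its role as a 2PFA-hard language). So your case~(i) is spurious: a word such as $aba$ with no $c$ is not in $L_m$, and your machine would wrongly accept it with certainty. The fix is trivial---reject whenever no $c$ occurs---so this is not the real obstacle.

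The real gap is the acceptance mechanism. You propose to ``flip a fixed-bias classical coin'' after each sweep that returns $|q_0\rangle$ and accept on heads. A fixed bias $p$ cannot be balanced against the per-sweep rejection probability $q=\sin^2((\ell-r)\theta)$: for any irrational $\theta/\pi$ there are integers $d=\ell-r$ with $\sin^2(d\theta)$ arbitrarily close to $0$, and for inputs with $q$ small enough the overall acceptance probability $(1-q)p/(q+p-qp)$ exceeds $\epsilon$. No single constant $p$ secures soundness for all inputs. What Ambainis and Watrous actually do, and what the paper reproduces in steps~\textbf{4}--\textbf{5}, is replace the fixed coin by a \emph{random walk on the input tape}: after each sweep, run a symmetric walk from position $1$ until an end-marker is hit, twice, and accept only if both walks reach $\$$ and then $k$ further fair coins all come up heads. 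This makes the per-iteration acceptance probability $2^{-k}/(|\omega|+1)^2$, i.e.\ inversely polynomial in the input length, and that is precisely what can be played off against the lower bound $q\ge 1/(2(\ell-r)^2+1)\ge 1/(2|\omega|^2+1)$ to obtain both $\epsilon$-soundness and polynomial expected time. Your final paragraph correctly names this balancing as the crux and even invokes the Ambainis--Watrous analysis, but the mechanism you actually wrote down (a constant-bias coin) is not the one that achieves it; the input-length-scaled random walk is the missing ingredient.
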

\begin{proof}
In order to prove Theorem \ref{th1}, we consider two matrices $U_{a}$ and $U_{c}$ defined as follows:
\begin{equation}\label{e7}
 U_{a}=\left(
  \begin{array}{cccc}
    \cos\alpha & -\sin\alpha & 0 & 0 \\
    \sin\alpha & \cos\alpha & 0 & 0 \\
    0& 0 & \cos\alpha & \sin\alpha\\
    0& 0 & -\sin\alpha & \cos\alpha  \\
  \end{array}
\right),
U_{c}=\left(
  \begin{array}{cccc}
    0 & 0 & 1 & 0 \\
    0 & 0 & 0 & 1 \\
    1& 0 & 0 & 0\\
    0& 1 & 0& 0  \\
  \end{array}
\right)
\end{equation}
where let $\alpha=\sqrt 2 \pi $.
Obviously, $U_{a}$ and $U_{c}$ are unitary, and  it is easy to get that
\begin{equation}\label{e8}
(U_{a})^k=\left(
  \begin{array}{cccc}
    \cos k\alpha & -\sin k\alpha & 0 & 0 \\
    \sin k\alpha & \cos k\alpha & 0 & 0 \\
    0& 0 & \cos k \alpha & \sin k\alpha\\
    0& 0 & -\sin k\alpha & \cos k \alpha  \\
  \end{array}
\right),
\end{equation}

and
\begin{equation}\label{e9}
(U_{a})^{l}U_{c}(U_{a})^{k}=\left(
  \begin{array}{cccc}
    0 & 0& \cos(k-l)\alpha  & \sin(k-l)\alpha \\
    0 & 0& -\sin(k-l)\alpha & \cos(k-l)\alpha \\
    \cos(k-l)\alpha& -\sin(k-l)\alpha & 0 & 0\\
    \sin(k-l)\alpha& \cos(k-l)\alpha  & 0 & 0  \\
  \end{array}
\right).
\end{equation}

 We now describe a 2QCFA $\mathcal{M}$ with 4 quantum states $\{|q_0\rangle, |q_1\rangle, |q_2\rangle,|q_3\rangle\}$, of which  $|q_0\rangle$ is the initial state.  $\mathcal{M}$  has three unitary operators: $U_{a}$, $U_{b}$ and $U_{c}$ where $U_{a}$ and $U_{c}$ are given in Eq. (\ref{e7}) and $U_{b}=U_{a}$. They can also be described as follows: \\

  \begin{tabular}{|l|l|l|}
  \hline
  $U_{a}|q_0\rangle=\cos \alpha |q_0\rangle+\sin \alpha|q_1\rangle$ & $U_{b}|q_0\rangle=\cos \alpha |q_0\rangle+\sin \alpha|q_1\rangle$&$U_{c}|q_0\rangle=|q_2\rangle$ \\
  $U_{a}|q_1\rangle=-\sin \alpha |q_0\rangle+\cos \alpha|q_1\rangle$ & $U_{b}|q_1\rangle=-\sin \alpha |q_0\rangle+\cos \alpha|q_1\rangle$&$U_{c}|q_1\rangle=|q_3\rangle$ \\
  $U_{a}|q_2\rangle=\cos \alpha |q_2\rangle-\sin \alpha|q_3\rangle$ & $U_{b}|q_2\rangle=\cos \alpha |q_2\rangle-\sin \alpha|q_3\rangle$&$U_{c}|q_2\rangle=|q_0\rangle$ \\
  $U_{a}|q_3\rangle=\sin \alpha |q_2\rangle+\cos \alpha|q_3\rangle$ &$U_{b}|q_3\rangle=\sin \alpha |q_2\rangle+\cos \alpha|q_3\rangle$ &$U_{c}|q_3\rangle=|q_1\rangle$ \\

  \hline
\end{tabular}\\

The automaton  $\mathcal{M}$ proceeds as follows:\\
\begin{tabular}{|l|}
  \hline
Check whether the input is of the form $xcy$ ($x,y\in\Sigma^*$). If not, reject.\\
Otherwise, repeat the following ad infinitum:\\
\ \ 1. Move the tape head to the first input symbol and set the quantum state to $|q_0\rangle$.\\
\ \ 2. While the currently scanned symbol is not $\$$, do the following:\\
\ \ \ \ (2-1). If the currently scanned symbol is $a$ or $b$, perform $U_a$ on the quantum state.\\
\ \ \ \ (2-2). If the currently scanned symbol is $c$, perform $U_c$ on the quantum state.\\
\ \ \ \ (2-3). Move the tape head one square to the right.\\
\ \ 3. Measure the quantum state. If the result is not $|q_2\rangle$, reject.\\
\ \ 4. Repeat the following subroutine two times:\\
\ \ \ \ (4-1).Move the tape head to the first input symbol.\\
\ \ \ \ (4-2).Move the tape head one square to the right.\\
\ \ \ \ (4-3).While the currently scanned symbol is not $\ |\hspace{-1.5mm}c$ or $\$$, do the following:\\
\ \ \ \ \ \ \ \ Simulate a coin flip. If the result is ``head", move right. Otherwise, move left.\\
\ \ 5. If both times the process ends at the right end-marker $\$$, do:\\
\ \ \ \ \ \ Simulate $k$ coin-flips. If all results are ``heads", accept.\\
  \hline
\end{tabular}\\

\begin{lemma}
If the input $\omega=xcy$ satisfies $|x|=n,$ $|y|=m$ and $n=m$, then the quantum state of $\mathcal{M}$ will evolve into $|q_2\rangle$ after loop \textbf{2} with certainty.
\end{lemma}
\begin{proof}
According to Eq. (\ref{e9}), the quantum state after loop \textbf{2} can be described as follows:
\begin{equation}
|q\rangle=(U_{a})^mU_c(U_a)^n|q_0\rangle
\end{equation}
\begin{equation}
=\left(
  \begin{array}{cccc}
    0 & 0& \cos(n-m)\alpha  & \sin(n-m)\alpha \\
    0 & 0& -\sin(n-m)\alpha & \cos(n-m)\alpha \\
    \cos(n-m)\alpha& -\sin(n-m)\alpha & 0 & 0\\
    \sin(n-m)\alpha& \cos(n-m)\alpha  & 0 & 0  \\
  \end{array}
\right)|q_0\rangle.
\end{equation}
Because $n=m$, we get
\begin{equation}
|q\rangle=\left(
  \begin{array}{cccc}
    0 & 0& 1 & 0 \\
    0 & 0& 0 & 1 \\
    1 & 0& 0 & 0\\
    0 & 1& 0 & 0  \\
  \end{array}
\right)|q_0\rangle=\left(
  \begin{array}{cccc}
    0 & 0& 1 & 0 \\
    0 & 0& 0 & 1 \\
    1 & 0& 0 & 0\\
    0 & 1& 0 & 0  \\
  \end{array}
\right)
\left(
 \begin{array}{c}
    1 \\
    0 \\
    0 \\
    0 \\
  \end{array}
\right)
=\left(
 \begin{array}{c}
    0 \\
    0 \\
    1 \\
    0 \\
  \end{array}
\right)=|q_2\rangle.
\end{equation}
So the lemma has been proved.
\end{proof}

\begin{lemma}\label{lm2}
If the input $\omega=xcy$ satisfies $|x|=n$, $|y|=m$ and  $n\neq m$, then $\mathcal{M}$ rejects after step \textbf{3}  with probability at least $1/(2(n-m)^2+1)$.
\end{lemma}

\begin{proof}
Starting with state $|q_0\rangle$, $\mathcal{M}$ changes its quantum state to  $(U_a)^mU_c(U_a)^n|q_0\rangle$ after loop \textbf{2} .
According to the analysis given above and Eq. (\ref{e8}-\ref{e9}), we get the quantum state
\begin{equation}
|q\rangle=(U_a)^mU_c(U_a)^n|q_0\rangle=\cos((n-m)\alpha)|q_2\rangle+\sin((n-m)\alpha)|q_3\rangle
\end{equation}
\begin{equation}
=\cos(\sqrt{2}(n-m)\pi)|q_2\rangle+\sin(\sqrt{2}(n-m)\pi)|q_3\rangle.
\end{equation}

 The probability of observing  $|q_3\rangle$ is
$\sin^2(\sqrt{2}(n-m)\pi)$ in step \textbf{3}. Without loss of generality, we assume that
$n-m>0$. Let $l$ be the closest integer to $\sqrt{2}(n-m)$. If $\sqrt{2}(n-m)>l$, then
$2(n-m)^2>l^2$. So we get $2(n-m)^2-1\geq l^2$ and $l\leq
\sqrt{2(n-m)^2-1}$. We have
\begin{equation}
\sqrt{2}(n-m)-l\geq \sqrt{2}(n-m)-\sqrt{2(n-m)^2-1}
\end{equation}

\begin{equation}
=\frac{(\sqrt{2}(n-m)-\sqrt{2(n-m)^2-1})(\sqrt{2}(n-m)+\sqrt{2(n-m)^2-1})}{\sqrt{2}(n-m)+\sqrt{2(n-m)^2-1}}
\end{equation}
\begin{equation}
=\frac{1}{\sqrt{2}(n-m)+\sqrt{2(n-m)^2-1}}>\frac{1}{2\sqrt{2}(n-m)}.
\end{equation}

Because $l$ is the closest integer to $\sqrt{2}(n-m)$, we have
$0<\sqrt{2}(n-m)-l<1/2$. Let $f(x)=sin(x\pi)-2x$. We have
 $f''(x)=-\pi^2\sin(x\pi)\leq 0$ when $x\in
[0,1/2]$. That is to say, $f(x)$ is concave in $[0,1/2]$, and we
have $f(0)=f(1/2)=0$. So for any $x\in[0,1/2]$, it holds that
$f(x)\geq 0$, that is, $\sin(x\pi)\geq 2x$. Therefore, we have
\begin{equation}
\sin^2(\sqrt{2}(n-m)\pi)=\sin^2((\sqrt{2}(n-m)-l)\pi)
\end{equation}
\begin{equation}
\geq (2(\sqrt{2}(n-m)-l))^2 =4(\sqrt{2}(n-m)-l)^2
\end{equation}
\begin{equation}
> 4(\frac{1}{2\sqrt{2}(n-m)})^2=\frac{1}{2(n-m)^2}>\frac{1}{2(n-m)^2+1}.
\end{equation}

If $\sqrt{2}(n-m)<l$, then
$2(n-m)^2<l^2$. So we get $2(n-m)^2+1\leq l^2$ and $l\geq
\sqrt{2(n-m)^2+1}$. We have
\begin{equation}
\sqrt{2}(n-m)-l\leq \sqrt{2}(n-m)-\sqrt{2(n-m)^2+1}
\end{equation}
\begin{equation}
=\frac{(\sqrt{2}(n-m)-\sqrt{2(n-m)^2+1})(\sqrt{2}(n-m)+\sqrt{2(n-m)^2+1})}{\sqrt{2}(n-m)+\sqrt{2(n-m)^2+1}}
\end{equation}
\begin{equation}
=\frac{-1}{\sqrt{2}(n-m)+\sqrt{2(n-m)^2+1}}<\frac{-1}{2\sqrt{2(n-m)^2+1}}.
\end{equation}

It follows that
\begin{equation}
l-\sqrt{2}(n-m)>\frac{1}{2\sqrt{2(n-m)^2+1}}.
\end{equation}
Because $l$ is the closest integer to $\sqrt{2}(n-m)$, we have
$0<l-\sqrt{2}(n-m)<1/2$. Therefore, we have
\begin{equation}
\sin^2(\sqrt{2}(n-m)\pi)=\sin^2((\sqrt{2}(n-m)-l)\pi)
\end{equation}
\begin{equation}
=\sin^2((l-\sqrt{2}(n-m))\pi)\geq (2(l-\sqrt{2}(n-m)))^2
\end{equation}
\begin{equation}
=4(l-\sqrt{2}(n-m))^2> 4(\frac{1}{2\sqrt{2(n-m)^2+1}})^2=\frac{1}{2(n-m)^2+1}.
\end{equation}

So the lemma has been proved.
\end{proof}

Simulation of a coin flip in loops \textbf{4} and \textbf{5} is a key component in the above algorithm. We will show that coin-flips can be simulated by 2QCFA.
\begin{lemma}
A coin flip can be simulated by 2QCFA $\mathcal{M}$ with a unitary operation and a measurement.
\end{lemma}
\begin{proof}
 A projective measurement $M=\{P_0,P_1\}$  is defined by
\begin{equation}
 P_0=|p_0\rangle\langle p_0|, P_1=|p_1\rangle\langle p_1|.
\end {equation}
The results 0 and 1 represent the
``tail'' and ``head''  of a coin flip, respectively. A unitary operator $U$ is given by
\begin{equation}
  U=\left(%
\begin{array}{cc}
  \frac{1}{\sqrt{2}} &  \frac{1}{\sqrt{2}} \\
   \frac{1}{\sqrt{2}} &  -\frac{1}{\sqrt{2}} \\
\end{array}%
 \right).
\end {equation}
The unitary operator $U$ changes the state as
 follows:
\begin{equation}
|p_0\rangle\rightarrow|\psi\rangle=\frac{1}{\sqrt{2}}(|p_0\rangle+|p_1\rangle),\ \  |p_1\rangle\rightarrow|\phi\rangle=\frac{1}{\sqrt{2}}(|p_0\rangle-|p_1\rangle).
\end{equation}

Suppose now that the machine starts with the state $|p_0\rangle$, changes its state by $U$, and then measures the state with $M$. Then  we will get the result  0 or 1 with probability
 $\frac{1}{2}$. This is similar to a coin flip
 process.
\end{proof}

\begin{lemma}
Every execution of loops \textbf{4} and \textbf{5}  leads to acceptance with  probability $1/2^k(n+m+2)^2$.
\end{lemma}
\begin{proof}
Loop \textbf{4} is two times of random walk starting at location 1 and
ending at location 0 (the left end-marker\ \
$|\hspace{-1.5mm}c$) or at location $n+m+2$ (the right
end-marker $\$$). It can be known from probability theory that the
probability of reaching the location $n+m+2$ is $1/(n+m+2)$
(see Chapter14.2 in \cite{WF}). Repeating it twice and flipping
$k$ coins, we  get the probability $1/2^k(n+m+2)^2$.
\end{proof}

Let $k=1+\lceil\log_2 1/\varepsilon\rceil$, then $\varepsilon\geq 1/2^{k-1}$. If
$\omega=xcy$ satisfies $|x|=|y|=n$, loop $\mathbf{2}$ always changes
$|q_0\rangle$ to $|q_2\rangle$, and $\mathcal{M}$ never rejects after the measurement in step $\mathbf{3}$.
After loops $\mathbf{4}$ and $\mathbf{5}$, the probability of $\mathcal{M}$
accepting $\omega$ is $1/2^k(2n+2)^2$. Repeating loops $\mathbf{4}$ and $\mathbf{5}$
for $cn^2$ times, the accepting probability is
\begin{equation}
Pr[\mathcal{M}\  accepts\  \omega] =1-(1-\frac{1}{2^k(2n+2)^2})^{cn^2},
\end{equation}
 and this can be made
arbitrarily close to 1 by selecting constant $c$ appropriately.

Otherwise, if $\omega=xcy$ satisfies $|x|=n$, $|y|=m$ and $n\neq m$, $\mathcal{M}$ rejects after
loop $\mathbf{2}$ and step $\mathbf{3}$  with probability
\begin{equation}
P_{r} >\frac{1}{2(n-m)^2+1}
\end{equation}
 according to Lemma \ref{lm2}. $\mathcal{M}$ accepts after loops $\mathbf{4}$ and $\mathbf{5}$ with probability
\begin{equation}
P_{a}=1/2^k(n+m+2)^2\leq \varepsilon/2(n+m+2)^2.
\end{equation}
If we repeat the whole algorithm indefinitely, the probability of $\mathcal{M}$ rejecting input $\omega$ is
\begin{equation}
Pr[\mathcal{M}\  rejects\  \omega] =\sum_{i\geq 0}(1-P_a)^i(1-P_r)^iP_r
\end{equation}
\begin{equation}
=\frac{P_r}{P_a+P_r-P_aP_r}>\frac{P_r}{P_a+P_r}
\end{equation}
\begin{equation}
>\frac{1/(2(n-m)^2+1)}{\varepsilon/2(n+m+2)^2+1/(2(n-m)^2+1)}
\end{equation}
\begin{equation}
>\frac{1/2}{1/2+\varepsilon/2}=\frac{1}{1+\varepsilon}>1-\varepsilon.
\end{equation}

If we assume that the input $\omega=xcy$ where $|x|=n$ and $|y|=m$, then loop $\mathbf{1}$ takes
$\mathbf{O}(n+m)$ time at worst cases, loop $\mathbf{2}$ and step $\mathbf{3}$ take
$\mathbf{O}(n+m)$ time exactly, and loops $\mathbf{4}$ and $\mathbf{5}$ take
$\mathbf{O}((n+m)^2)$ time. The expected number of repeating the algorithm
 is $\mathbf{O}((n+m)^2)$. Hence, the expected running
time of $\mathcal{M}$ is $\mathbf{O}((n+m)^4)$.
\end{proof}

\section {A 2PFA recognizing the language $L_m$ }

$L_m$ looks like the language $L_{middle}=\{xay\mid x,y\in\Sigma^{*},a\in\Sigma, |x|=|y|\}$ which has been proved  to be not recognizable by 2PFA in Dwork and Stockmeyer's paper\cite{CL2}. However, we will prove that $L_m$ can be recognized by 2PFA in this section, but the expected time needed is exponential.

\begin{theorem}
For any $\epsilon>0$, there is a 2PFA $\mathcal{M}$ that accepts
any $\omega \in L_{m}=\{xcy\mid  \Sigma=\{a, b, c\}, x,y\in\{a,b\}^{*},c\in\Sigma, |x|=|y| \}$
at least $1-\epsilon$, rejects any $\omega \notin
L_{m}$ with
probability at least $1-\epsilon$.
\end{theorem}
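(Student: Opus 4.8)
The plan is to first peel off the purely regular part of the problem with a single deterministic scan, reducing membership in $L_m$ to one comparison of two lengths, and then to decide that comparison by a Freivalds-style racing procedure whose bias is a positive constant exactly when the two lengths differ and vanishes exactly when they agree.

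First I would use the finite control during one left-to-right pass to record the parity of the total length together with whether the number of occurrences of the letter $c$ is $0$, $1$, or at least $2$, and also whether the blocks flanking a $c$ are empty. If there are at least two $c$'s the input is rejected; if there is no $c$ the input lies in $\{a,b\}^{*}$ and is accepted iff its length is odd; the trivial boundary cases in which a flanking block is empty are settled deterministically (for instance $\omega=c$ is accepted). The only case requiring randomness is exactly one $c$ with both blocks nonempty, and there the $c$ conveniently serves as a landmark splitting the tape into a left block $x$ of length $n$ (between $\ |\hspace{-1.5mm}c$ and $c$) and a right block $y$ of length $m$ (between $c$ and $\$$); in this case $\omega\in L_m$ iff $n=m$, so everything reduces to deciding the predicate $n=m$ with bounded error.

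For that comparison I would use the following \emph{race}. One \emph{round} consists of two independent passes: a pass across $x$ that flips a fair coin at each of its symbols and ``survives'' only if all flips are heads (probability $p=2^{-n}$), and a similar pass across $y$ (probability $q=2^{-m}$). Rounds are repeated until one in which at least one block survives; that block is the \emph{winner}, and a round in which both survive (probability $\le 2^{-n-m}$) simply restarts the race. Conditioned on termination the winner is $x$ with probability $p(1-q)/\bigl(p(1-q)+q(1-p)\bigr)$, which equals $1/2$ by symmetry when $n=m$; when $n\neq m$ the shorter block wins, and a short computation shows that already for $|n-m|=1$ its winning probability is at least $\approx 2/3$, hence bounded away from $1/2$ by an absolute constant $c_0>0$ uniformly over all $n\neq m$. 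Since the target error $\epsilon$ is a \emph{fixed} constant, I would hard-wire into the finite control a constant number $K=O(\log(1/\epsilon))$ of independent races, tally how many are won by $x$ (a count in $\{0,\dots,K\}$, storable in finitely many states), and accept iff this count lies in a central band such as $(0.4K,\,0.6K)$. A Chernoff bound places the count inside the band with probability $\ge 1-\epsilon$ when $n=m$ (mean $K/2$) and outside it with probability $\ge 1-\epsilon$ when $n\neq m$ (mean at least $\tfrac{2}{3}K$ or at most $\tfrac{1}{3}K$), yielding the two-sided bounded error demanded by the statement. For the running time, each round costs $O(n+m)$ steps and the expected number of rounds in a race is $1/(p+q-pq)=\Theta\!\left(2^{\min(n,m)}\right)$, so the whole machine halts in expected time $O\bigl(K\,(n+m)\,2^{\min(n,m)}\bigr)$, which is exponential in the input length.

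The main obstacle I expect is the quantitative heart of the comparison rather than the reduction: establishing the \emph{uniform} constant gap $c_0$ on the race bias for every $n\neq m$, so that a single fixed $K$ works for all inputs at once, and then converting the constant-bias versus zero-bias dichotomy into a clean two-sided error bound while rigorously proving the expected time is exponential. Some care is also needed to confirm that each race terminates with probability one, that restarting on simultaneous survivals does not disturb the exact symmetry at $n=m$, and that the banded majority test is genuinely realizable with finitely many classical states for the fixed $\epsilon$; the degenerate configurations in which a block is empty (so its all-heads pass survives vacuously) must be excluded from the randomized stage and handled in the deterministic preprocessing.
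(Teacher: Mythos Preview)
Your overall strategy is sound and genuinely different from the paper's. The paper follows Freivalds' original $a^nb^n$ template: an infinite loop in which one iteration \emph{rejects} if all of $k(2n{+}2)$ or $k(2m{+}2)$ coin flips come up heads (the side chosen by a preliminary fair coin) and \emph{accepts} if all of $k\ell$ flips come up heads, with $\ell=|\omega|$; correctness comes from comparing $P_r$ and $P_a$ and letting the parameter $k$ force the desired ratio. Your route instead runs a symmetric race between $2^{-n}$ and $2^{-m}$, establishes a uniform bias $\ge 2/3$ for the shorter side whenever $n\neq m$ (your computation is right: $q(1-p)/\bigl(p(1-q)\bigr)\le q/p\le 1/2$), and then amplifies by a hard-wired, constant-size majority vote over $K=O(\log(1/\epsilon))$ races. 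Both arguments are classical and correct; the paper's buys a single fixed machine whose error is tuned by one integer $k$ inside the loop, while yours makes the amplification step more transparent (independent Bernoulli trials plus Chernoff) at the cost of storing a counter up to $K$ in the finite control.

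There is, however, one genuine slip in your deterministic preprocessing. Every word in $L_m$ contains exactly one occurrence of the literal symbol $c$ (since $x,y\in\{a,b\}^*$), so an input with \emph{no} $c$ is never in $L_m$ and must be rejected outright; your rule ``accept iff the length is odd'' in that branch would wrongly accept, e.g., $aba$. This looks like a momentary conflation with $L_{middle}$, where the middle symbol is arbitrary and the language is exactly the odd-length strings. The fix is trivial---reject whenever the $c$-count is $0$---and nothing else in your argument is affected. The remaining boundary handling (two or more $c$'s rejected; empty flanking blocks settled deterministically) and the exponential expected-time analysis are fine.
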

\begin{proof}
 We assume that the input $\omega\in \{a, b, c\}^*$ has the form $\omega=xcy$ where $|\omega|=l$, $|x|=n$, and $|y|=m$. Let $k$ be a positive integer. The algorithm for a 2PFA to recognize $L_m$ is described as follows:\\
\begin{tabular}{|l|}
  \hline
Checks whether the length of input $l=|\omega|$ is odd. If not, rejects.\\
Otherwise, repeat the following ad infinitum:\\
1. Move the tape head to symbol $c$. If there is not symbol $c$ in $\omega$, rejects.\\
2. Simulate a coin flip, and do the following:\\
(2-1). If the outcome is ``head", simulate $k(2n+2)$ coin-flips, \\
and move the tape head left to keep count.\\
(2-2). Otherwise, simulate $k(2m+2)$  coin-flips, \\
\ \ \ \ \ \ \ \ \ and move its tape head right to keep count.\\
(2-3). If all $k(2n+2)$ or $k(2m+2)$ flips have outcome ``heads" in either case,\\
\ \ \ \ \ \ \ \ \   reject. \\
3. Simulate $kl$ coin-flips using the input $\omega$ to keep count. \\
\ \ \ \ \ \ \ \ \  If all the outcomes are ``heads", accept.\\
  \hline
\end{tabular}\\

We argue that this algorithm is a 2PFA for $L_m$. Consider first the case that $\omega\in L_m$.
At each iteration, n=m. The probability of $\mathcal{M}$ rejecting $\omega$ in an iteration is
\begin{equation}
P_r=2^{-k(2n+2)},
\end{equation}
and the probability of $\mathcal{M}$ accepting $\omega$ in an iteration  is
\begin{equation}
P_a=(1-2^{-k(2n+2)})\times 2^{-kl}\geq 2^{-1}2^{-kl}=2^{-k(2n+1)-1}.
\end{equation}
Repeating the iteration indefinitely, causes $\mathcal{M}$ to eventually accept $\omega$ with probability
\begin{equation}
Pr[\mathcal{M}\  accepts\  \omega] =\sum_{i\geq 0}(1-P_a)^i(1-P_r)^{i+1}P_a
\end{equation}
\begin{equation}
=\frac{P_a-P_aP_r}{P_a+P_r-P_aP_r}=\frac{P_a(1-P_r)}{P_a(1-P_r)+P_r}
\end{equation}
\begin{equation}
= \frac{P_a}{P_a+P_r/(1-P_r)}\geq \frac{P_a}{P_a+2P_r}
\end{equation}
\begin{equation}
\geq \frac{2^{-k(2n+1)-1}}{2^{-k(2n+1)-1}+2\times2^{-k(2n+2)}}= \frac{1}{1+2^{-(k+2)}}.
\end{equation}

Therefore, the probability that $\mathcal{M}$ accepts before it rejects approaches 1 as $k$ increases.

Consider now the other case that  $\omega\notin L_m$. At each iteration, we have $n\neq m$. Therefore, either $2n+2\leq l-1$ or $2m+2\leq l-1$, and whichever case holds, $\mathcal{M}$ will choose the case with probability $1/2$. The probability of $\mathcal{M}$ rejecting $\omega$ in an iteration is
\begin{equation}
P_r=2^{-1}2^{-k(2n+2)}+2^{-1}2^{-k(2m+2)}\geq 2^{-1}2^{-k(l-1)},
\end{equation}
and the probability of $\mathcal{M}$ accepting $\omega$ in an iteration  is
\begin{equation}
P_a=(1-P_r)\times 2^{-kl}\leq 2^{-kl}.
\end{equation}
Repeating the iteration indefinitely, causes $\mathcal{M}$ to eventually reject $\omega$ with probability
\begin{equation}
Pr[\mathcal{M}\  rejects\  \omega] =\sum_{i\geq 0}(1-P_a)^i(1-P_r)^iP_r
\end{equation}
\begin{equation}
=\frac{P_r}{P_a+P_r-P_aP_r}\geq \frac{P_r}{P_a+P_r}
\end{equation}
\begin{equation}
\geq \frac{2^{-1}2^{-k(l-1)}}{2^{-kl}+2^{-1}2^{-k(l-1)}}=\frac{1}{2^{-(k-1)}+1}.
\end{equation}
Therefore, the probability that $\mathcal{M}$ rejects before it accepts approaches 1 as $k$ increases.

If we assume that the length of the input $|\omega|=l$, then each iteration takes $\mathbf{O}(l)$ time.
The expected number of repeating the algorithm
 is $2^{\mathbf{O}(l)}$. Hence, the expected running
time of $\mathcal{M}$ is $\mathbf{O}(l)2^{\mathbf{O}(l)}$, which is exponential in $l$.
\end{proof}

\begin{remark} It is easy to  show that $L_m$ is non-regular by using the pumping lemma of  regular languages.\end{remark}

In the above theorem, we showed that $L_m$ can be recognized by a 2PFA in exponential expected time $\mathbf{O}(l)2^{\mathbf{O}(l)}$ where $l$ is the length of input. Note that, any 2PFA needs exponential expected time to recognize it, since it is a {\it non-regular language} \cite{AA}. However, we have shown that $L_m$ can be recognized by a 2QCFA in polynomial expected time.  Hence, 2QCFA show superiority over 2PFA  in recognizing the language $L_m$.

\section*
{ACKNOWLEDGMENT}

The authors are thankful to the anonymous referees and editor
for their comments and suggestions that greatly help to improve
the quality of the manuscript.

\end{document}